\documentclass{svjour3}  

\usepackage[utf8]{inputenc}

\usepackage{amssymb}
\usepackage{amsmath}
\usepackage{hyperref}
\usepackage{graphicx}
\usepackage{xcolor}
\usepackage{bm}

\usepackage{cite}

\usepackage{amsmath,amssymb}
\usepackage{graphicx}
\usepackage{enumitem}
\usepackage{xspace}

\newcommand{\findmin}{\textit{FindMin}\xspace} %
\newcommand{\ftmin}{\textsc{FT-Min}\xspace} %

\newcommand{\E}{\mathbb{E}}

\smartqed 

\title{Approximate Minimum Selection with Unreliable Comparisons in Optimal Expected Time\thanks{Research supported by SNF (project number 200021\_165524).}}
\titlerunning{Apx. Minimum Selection with Unreliable Comparisons in Optimal Expected Time}

\author{Stefano Leucci \and Chih-Hung Liu}
\authorrunning{Stefano Leucci and Chih-Hung Liu}

\institute{Stefano Leucci 
	\at Department of Information Engineering, Computer Science and Mathematics, University of L'Aquila, Italy \\\email{stefano.leucci@univaq.it}  
	\and
	Chih-Hung Liu \at Department of Computer Science, ETH Z\"{u}rich, Switzerland \\\email{chih-hung.liu@inf.ethz.ch}
}

\begin{document}

\maketitle

\begin{abstract}
We consider the \emph{approximate minimum selection} problem in presence of \emph{independent random comparison faults}.
This problem asks to select one of the smallest $k$ elements in a linearly-ordered collection of $n$ elements by only performing \emph{unreliable} 
pairwise comparisons: whenever two elements are compared, there is a constant probability that the wrong answer is returned. 

We design a randomized algorithm that solves this problem  with probability $1-q \in [ \frac{1}{2}, 1)$ and for the whole range of values of $k$ using $O( \frac{n}{k} \log \frac{1}{q} )$ expected time.
Then, we prove that the expected running time of any algorithm that succeeds w.h.p. must be $\Omega(\frac{n}{k}\log \frac{1}{q})$, thus implying that our algorithm is asymptotically optimal, in expectation.
These results are quite surprising in the sense that for $k$ between $\Omega(\log \frac{1}{q})$ and $c \cdot n$, for any constant $c<1$, the expected running time must still be $\Omega(\frac{n}{k}\log \frac{1}{q})$ even in absence of comparison faults. Informally speaking, we show how to deal with comparison errors without any substantial complexity penalty w.r.t.\ the fault-free case.
Moreover, we prove that as soon as $k = O( \frac{n}{\log\log \frac{1}{q}})$, it is possible to achieve the optimal \emph{worst-case} running time of $\Theta(\frac{n}{k}\log \frac{1}{q})$. 
\end{abstract}

\section{Introduction}\label{sec-intro}

In an ideal world computational tasks are always carried out reliably, i.e., every operation performed by an algorithm behaves exactly as intended.
Practical architectures, however, are error-prone and even basic operations can sometimes return the wrong results, especially when large-scale systems are involved.
When dealing with these spurious results the first instinct is that of trying to detect and correct the errors as they manifest, so that the problems of interest can then be solved using classical (non fault-tolerant) algorithms.
An alternative approach deliberately allows errors to interfere with the execution of an algorithm, in the hope that the computed solution will still be good, at least in an \emph{approximate} sense. This begs the question: \emph{is it possible to devise algorithms that cope with faults by design and return probably good solutions?}

We investigate this question by considering a generalization of the fundamental problem of finding the minimum element in a totally-ordered set:
in the \emph{fault-tolerant approximate minimum selection} problem ($\ftmin(k)$ for short) we wish to return one of the smallest $k$ elements in a collection of size $n$ using only \emph{unreliable pairwise comparisons}, i.e., comparisons in which the result can sometimes be incorrect due to \emph{errors}.
This allows, for example, to find a representative in the top percentile of the input set, or to obtain a good estimate of the minimum from a set of noisy observations.

In this paper we provide both upper and lower bounds on the running time of any (possibly randomized) algorithm that solves $\ftmin(k)$ with probability $1-q$, where $q \in (0, \frac{1}{2}]$.
Our solution is optimal with respect to the \emph{expected} number of comparisons for the whole range of value of $k$. Moreover, we also design an algorithm that uses the asymptotically optimal number of comparisons \emph{in the worst-case}
for all values of $k$ in $O(\frac{n}{\log \log \frac{1}{q}})$.

Our results find application in any setting that is subject to random comparison errors (e.g., due to communication interferences, alpha particles, charge collection, cosmic rays \cite{Baumann05,Catania04}, or energy-efficient architectures where
the energy consumed by the computation can be substantially reduced if a small fraction of faulty results is allowed\cite{Anthes13,PalemL13,CheemavalaguKPAC05,CheemavalaguKP04}),
or in which performing accurate comparisons is too resource-consuming (think, e.g., of the elements as references to remotely stored records) while approximate comparisons can be carried out much quicker. 
One concrete application might be selecting one of the top-tier products from a collection of items obtained from an imprecise manufacturing process (i.e. a high-quality cask from a distillery or a fast semiconductor from a fabrication facility). In these settings products can be compared to one another (either by human experts or by automated tests), yet the result of the comparisons are not necessarily accurate.\footnote{Interestingly, some semiconductor companies consider the more general problem of approximately classifying all their products in a process known as ``product-binning''. For example, identically-built processors can be assigned different commercial names depending on their performances.}

Before presenting our results in more detail, let us briefly discuss the error model we use.

\subsection{The Comparison Model}
\label{sec:comparison_model}

We consider \emph{independent random comparison faults}, a simple and natural error model in which
there exists a \emph{true} strict ordering relation among the set $S$ of $n$ input elements, but algorithms are only allowed to gather information on this relation via \emph{unreliable} comparisons between pairs of elements.
The outcome of a comparison involving two distinct elements $x$ and $y$ can be either ``$<$'' or ``$>$'' to signify that $x$  is reported to be ``smaller than'' or ``larger than'' $y$, respectively.
Most of the times the outcome of a comparison will correspond to the true relative order of the compared elements but there is a probability, which is upper bounded by a constant $p < \frac{1}{2}$, that the wrong result will be reported instead.
An algorithm can compare the same pair of elements more than once and, when this happens, the outcome of each comparison is chosen \emph{independently} of the previous results. In a similar way, comparisons involving different pairs of elements are also assumed to be independent.

The above error model 
was first considered
in the 80s and 90s when the related problems of finding the minimum, selecting the $k$-th smallest element, and of sorting a sequence have been studied \cite{FeigeRPU94,Pelc89,Pelc02}.
The best solutions are due to Feige et al. \cite{FeigeRPU94}, who provided optimal Monte Carlo algorithms having a success probability of $1-q$ and requiring $O\big(n \log \frac{1}{q}\big)$, $O\big(n\log \frac{\min\{k,n-k\}}{q}\big)$ and $O\big(n\log \frac{n}{q}\big)$ time, respectively. In the sequel we will invoke the minimum finding algorithm of \cite{FeigeRPU94} --which we name \findmin-- as a subroutine. We therefore find convenient to summarize its performances in the following:%
\begin{theorem}[\hspace{1sp}\cite{FeigeRPU94}]\label{thm-ftmin}
	Given a set $S$ of $n$ elements and a parameter $q \in (0, \frac{1}{2})$,
	Algorithm \findmin returns, in $O\left(n \log \frac{1}{q}\right)$ worst-case time, the minimum of $S$ with a probability of at least $1-q$. 
\end{theorem}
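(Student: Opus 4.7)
The plan is to prove \findmin via a noisy knockout tournament with round-dependent amplification. I would arrange the $n$ input elements in a balanced binary bracket of $\lceil \log_2 n\rceil$ rounds. In round $i$ (numbered so that round $1$ has $n/2$ matches and round $\lceil \log_2 n\rceil$ is the final), each of the surviving pairs is compared, and the ``winner'' advances to the next round. The returned element is the winner of the final round. Since each raw comparison errs with constant probability $p < 1/2$, a single noisy comparison per match would let the true minimum be eliminated with probability bounded away from $0$ in each round, so the probability that the minimum survives all $\log n$ rounds would shrink polynomially in $n$. To fix this, each match in round $i$ is decided by the majority of $c_i = C(\log(1/q) + i)$ independent noisy comparisons of the two elements, where $C$ is a sufficiently large constant depending only on $p$.

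The key analytic step is to bound the probability that the true minimum loses a match in round $i$. Since $p$ is a constant strictly below $1/2$, a standard Chernoff bound on the majority of $c_i$ i.i.d.\ biased coins yields per-match error probability at most $e^{-\Omega(c_i)}$. The true minimum participates in exactly one match per round, so a union bound over rounds gives
\[
\Pr[\text{\findmin returns a non-minimum element}] \;\leq\; \sum_{i=1}^{\lceil\log_2 n\rceil} e^{-\Omega(c_i)} \;\leq\; \sum_{i \geq 1} q\cdot 2^{-i} \;=\; O(q),
\]
where the second inequality uses the definition of $c_i$ and absorbs constants by choosing $C$ large enough. Rescaling $C$ once more makes this at most $q$, proving the probability claim.

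For the running time, the total number of noisy comparisons is
\[
\sum_{i=1}^{\lceil\log_2 n\rceil} \frac{n}{2^i}\cdot c_i \;=\; C\sum_{i=1}^{\lceil\log_2 n\rceil} \frac{n}{2^i}\bigl(\log(1/q) + i\bigr) \;=\; O\bigl(n\log(1/q)\bigr) + O\!\left(n\sum_{i\geq 1} \frac{i}{2^i}\right) \;=\; O\bigl(n\log(1/q)\bigr),
\]
giving the claimed bound. The main obstacle is getting the $\log(1/q)$ dependence rather than the naive $\log(\log n/q)$ that a uniform per-round repetition count would yield. The crucial observation is that, although a naive union bound over \emph{all} matches in round $i$ would require each match to be correct with probability $1 - q/(\text{total matches})$, we only ever need correctness in the single match that involves the minimum. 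This lets us use asymmetrically few repetitions in the low rounds (many cheap matches, modest per-match reliability) and many repetitions only near the top (few matches, high per-match reliability), yielding a geometric sum dominated by its first term.
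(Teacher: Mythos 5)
Your proposal is correct, but note that the paper itself does not prove this statement: Theorem~\ref{thm-ftmin} is imported as a black box from Feige et al.~\cite{FeigeRPU94}, so there is no in-paper proof to match against. What you have written is a valid self-contained derivation, and it is essentially the classical tournament argument behind the cited result: a balanced knockout bracket in which the match in round $i$ is decided by the majority of $\Theta(\log\frac{1}{q}+i)$ fresh noisy comparisons, so that the true minimum survives round $i$ except with probability at most $q\cdot 2^{-i}$, and the cost $\sum_i \frac{n}{2^i}\bigl(\log\frac{1}{q}+i\bigr)=O\bigl(n\log\frac{1}{q}\bigr)$ is dominated by the cheap early rounds. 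Your closing observation — that one only needs reliability along the single root-to-leaf path played by the minimum, not simultaneously in all matches of a round — is exactly the point that saves the naive $\log\frac{\log n}{q}$ repetition count. Two details are worth making explicit but are not gaps: the per-match failure bound holds conditionally on any opponent (the minimum is truly smaller than whoever it meets, and the $c_i$ repetitions are independent of the comparisons that determined the opponent), which is what licenses the union bound over rounds; and non-powers of two are handled by byes. It is also worth noting the kinship with the paper's own knockout tournament of Section~\ref{sec-tournament} (Lemma~\ref{lemma:tournament_success_prob}), where round $i$ uses $2c_p\lceil 2^i\rceil+5$ comparisons per match: there the repetition count grows geometrically, buying an exponentially small failure probability $2^{-(n+1)}$ at cost $O(n\log n)$, whereas your linear-in-$i$ schedule with an additive $\log\frac{1}{q}$ term realizes the $O\bigl(n\log\frac{1}{q}\bigr)$-time, probability-$(1-q)$ trade-off stated in Theorem~\ref{thm-ftmin}.
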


\subsection{Our Contributions}

We develop a randomized algorithm that solves $\ftmin(k)$ with probability $1-q$
in $O(\frac{n}{k} \log \frac{1}{q})$  expected  time for the whole range of values of $k \in [1, n-1]$, where $q \in (1, \frac{1}{2}]$.
Moreover, we show that the expected running time of our algorithm is asymptotically optimal for $k \le c \cdot n$ for any constant $c \le 1$ by proving that any algorithm that solves $\ftmin(k)$ with probability $1-q$ requires $\Omega(\frac{n}{k}\log \frac{1}{q})$ comparisons in expectation. 

These results are quite surprising since for $k=\Omega(\log \frac{1}{q})$
the expected running time must still be $\Omega(\frac{n}{k}\log \frac{1}{q})$ even in absence of comparison faults (indeed, any random subset of $o(\frac{n}{k}\log \frac{1}{q})$ elements does not contain any of smallest $k$ elements with a probability larger than $q$). 
In other words, comparison errors almost do not increase the computational complexity of the approximate minimum selection problem. 
In addition, we show how to additionally guarantee that the \emph{worst-case} running time of our algorithm will be $O( (\frac{n}{k} + \log \log \frac{1}{q}) \cdot \log
\frac{1}{q} )$. This implies that as soon as $k = O(\frac{n}{\log\log \frac{1}{q}})$ we can solve $\ftmin(k)$ with probability $1-q$ in the optimal \emph{worst-case} running time of $O(\frac{n}{k}\log \frac{1}{q})$.

Another way to evaluate algorithms for $\ftmin(k)$
is to instantiate $q$ consider the range of values of $k$ that they are able to handle, within a given (asymptotic) limit $T$ on their running time.
For example, if we require the algorithms to succeed \emph{with high probability} (i.e., we set $q=\frac{1}{n}$) and $T=O(n)$,
a natural $O(\frac{n}{k} \log^2 n)$-time algorithm that executes \findmin with $q=O(\frac{1}{n})$
on a randomly chosen subset of $O(\frac{n}{k}\log n)$ elements 
only works for $k = \Omega( \log^2 n )$, while our algorithm works for any $k=\Omega(\log n)$, thus exhibiting a quadratic difference in w.r.t.\ the smallest achievable $k$.
More importantly, when $T$ is $o(\log^2 n)$,
the natural algorithm cannot provide any bound on the rank of the returned element w.h.p.,
while our algorithms yield an asymptotically optimal upper bound of $O(\frac{n}{T}\log n)$ if $T$ refers to an expected running time or if  $T = \Omega(\log n \cdot \log \log n)$ and $T$ is refers to a worst-case running time.

To obtain or results we first design an $O(\frac{n}{k}\log \frac{1}{q})$-time reduction that transforms the problem of solving $\ftmin(k)$ into the problem of solving $\ftmin\left(\frac{3}{4}n\right)$ on instances having $n=\Theta(\log \frac{1}{q})$ elements.
This reduction shows that if it is possible to solve $\ftmin\big(\frac{3}{4}n\big)$ with probability $1-\Omega(q)$ in $T(n)$ time, then 
$\ftmin(k)$ can be solved in $O\big(\frac{n}{k}\log \frac{1}{q} \big) +T\left( \Theta(\log \frac{1}{q}) \right)$ time with probability $1-q$.

To obtain the expected optimal running time we can hence focus on solving $\ftmin\left(\frac{3}{4}n\right)$ in time $O(\frac{1}{q})$, with a probability of success of $1-q$.
We do so using a multi-phase process in which we iteratively consider one element at a time until we find a element that is likely to be among the smallest elements of $S$.
In this process elements are subject to a series of tests and advance from on phase to to next by passing the majority of them. Tests from different phases have exponentially decreasing error probabilities so that large elements are likely do be discarded quickly while the probability that a small elements advances from one phase to the next phase increases with the phase number.

The above approach also results in an algorithm that requires $O(\frac{n}{k} \log \frac{1}{q} + \log^2 \frac{1}{q})$ worst-case time to solve $\ftmin(k)$ with probability $1-q$, which is optimal for $k = O( \frac{n}{\log \frac{1}{q}} )$.
We improve this worst-case time complexity by designing an algorithm for $\ftmin\big(\frac{3}{4}n\big)$ that is reminiscent of knockout-style tournaments and requires only $O(n\log \frac{1}{q})$ worst-case time.
Thanks to our reduction, this improves the worst-case time required to solve $\ftmin(k)$ to $O\big(\frac{n}{k}\log n + (\log \frac{1}{q})\log\log \frac{1}{q}\big)$, which is optimal for $k = O( \frac{n}{\log\log \frac{1}{q}} )$.

\subsection{Other Related Works}
The problem of finding the \emph{exact} minimum of a collection of elements using unreliable comparisons had already received attention back in 1987 when Ravikumar et al.~\cite{RavikumarGL87} 
considered the variant in which
only up to $f$ comparisons can fail and proved that $\Theta(f n)$ comparisons are needed in the worst case. Notice that, in our case, $f= \Omega(\frac{n}{k})$ in expectation since a $(1/k)$-fraction of the elements must be compared and each comparison fails with constant probability. 
In \cite{Aigner97}, Aigner considered a \emph{prefix-bounded} probability of error $p < \frac{1}{2}$: at any point during the execution of an algorithm, at most a $p$-fraction of the past comparisons failed. Here, the situation significantly worsens as up to $\Theta(\frac{1}{1-p})^n$ comparisons might be necessary to find the minimum (and this is tight). 
Moreover, if the fraction of erroneous comparisons is \emph{globally} bounded by $\rho$, and $\rho=\Omega(\frac{1}{n})$, then Aigner also proved that no algorithm can succeed with certainty~\cite{Aigner97}.
The landscape improves when we assume that errors occur independently at random:
in addition to the already-cited $O(n \log \frac{1}{q})$-time algorithm by Feige et al.~\cite{FeigeRPU94} (see Section~\ref{sec:comparison_model}), a recent paper by Braverman et al.~\cite{BravermanMW16} also considered the \emph{round complexity} %
and the number of comparisons needed by partition and selection algorithms. The results in \cite{BravermanMW16} imply that,
for constant error probabilities, $\Theta(n \log n)$ comparisons are needed by any algorithm that selects the minimum w.h.p.

Recently, Chen et al. \cite{ChenGMS17} focused on computing the smallest $k$ elements given $r$ independent noisy comparisons between each pairs of elements. For this problem, in a more general error model, they provide a tight algorithm that requires at most $O(\sqrt{n} \,\textup{polylog}\, n)$ times as many samples as the best possible algorithm that achieves the same success probability.

If we turn our attention to the related problem of sorting with faults, then $\Omega(n \log n + fn)$ comparisons are needed to correctly sort $n$ elements when up to $f$ comparisons can return the wrong answer, and this is tight \cite{LakshmananRG91, Long92, Bagchi92}. 
In the prefix-bounded model, the result in \cite{Aigner97} on minimum selection also implies that $(\frac{1}{1-p})^{O(n \log n)}$ comparisons are sufficient for sorting, while a lower bound of $\Omega\big( ( \frac{1}{1-p} )^n \big)$ holds even for the easier problem of checking whether the input elements are already sorted \cite{BorgstromK93}.
The problem of sorting when faults are permanent (or, equivalently, when a pair of elements can only be compared once) has also been extensively studied and it exhibits connections to both the \emph{rank aggregation} problem and to the \emph{minimum feedback arc set} \cite{MakarychevMV13, Kenyon-MathieuS07, BravermanMW16, BravermanM08, KleinPSW22,LeightonM99, GeissmannMW15, GeissmannLLP17, GeissmannLLP20, GeissmannLLP19}.  Error-prone models have also been considered in the context of optimization algorithms \cite{Geissmann0LPP19} and in the design resilient data strictures \cite{FinocchiGI09, 0001LM19}.
For more related problems on the aforementioned and other fault models,
we refer the interested reader to \cite{Pelc02}  for a survey and to \cite{Cicalese13} for a monograph.

Finally, we point out that, in the fault-free case,
a simple sampling strategy allows to find one of the smallest $k$ elements in $O(\min\{n, \frac{n}{k} \log \frac{1}{q} \})$ time with probability at least $1-q$.

\subsection{Paper Organization}
In Section~\ref{sec:preliminaries} we give some preliminary remarks and we outline a simple strategy to reduce the error probability. In Section~\ref{sec:lower_bounds} we prove our lower bound, while Section~\ref{sec:reduction} is devoted to our reduction from $\ftmin(k)$ to $\ftmin(\frac{3}{4}n)$.
Finally, in Section~\ref{sec:optimal_expected} and Section~\ref{sec:worst_case}, we design our algorithms that solve $\ftmin(k)$ in $O(\frac{n}{k} \log \frac{1}{q})$ expected time and $O(\frac{n}{k} \log \frac{1}{q} + (\log \frac{1}{q}) \log\log \frac{1}{q})$ worst-case time, respectively.

\section{Preliminaries}
\label{sec:preliminaries}

We will often draw elements from the input set into one or more (multi)sets using sampling with replacement, i.e., we allow multiple copies of an element to appear in the same multiset.
We will then perform comparisons among the elements of these multisets as if they were all distinct: when two copies of the same 
element are compared, we break the tie using any arbitrary (but consistent) ordering among the copies.

According to our error model, a comparison faults happen independently at random with probability of at most $p < \frac{1}{2}$. This probability can be reduced by repeating a comparison multiple times and using a simple majority strategy.
Incidentally, the same strategy also works in the related setting in which pairwise comparisons are not longer allowed but we have access to a \emph{noisy oracle} $\mathcal{O}$ that can be queried with an element $x \in S$ and returns either true or false depending on some property of $x$. Here $\mathcal{O}$ correctly answers a query with probability at least $1-p$ and $\mathcal{O}$'s errors are independent.
The performances of this majority strategy are formalized in the following:
\begin{lemma}
	\label{lemma:prob_boosting}
	Let $x$ and $y$ be two distinct elements.
	For any error probability upper bounded by a constant $p \in [0, \frac{1}{2} )$ there exists a constant $c_p \in \mathbb{N}$ such that the strategy that compares $x$ and $y$ (resp. queries $\mathcal{O}$ with $x$) $2 c_p \cdot t+1$ times and returns the majority result is correct with probability at least $1- e^{-t}$.
\end{lemma}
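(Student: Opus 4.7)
The plan is to bound the probability that a majority of independent biased coin flips comes out against the true outcome, using a standard Chernoff-type concentration bound. Let $N = 2c_p t + 1$ denote the total number of comparisons (or queries) performed, and for $i = 1, \dots, N$ let $X_i$ be the indicator of the event that the $i$-th comparison gives the wrong answer. By the independence assumption of the error model, $X_1, \dots, X_N$ are mutually independent Bernoulli variables, each with success probability at most $p$. The majority vote returns the wrong answer exactly when $X := \sum_i X_i \geq \lceil N/2 \rceil = c_p t + 1$.

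I would then observe that since $p < 1/2$, the threshold $N/2$ lies strictly above the expectation $\mu := \E[X] \leq pN$, so we are in the regime where Chernoff bounds give exponential decay. Writing $N/2 = (1+\delta)\mu$ with $\delta \geq \frac{1-2p}{2p}$ (using $\mu \leq pN$), and applying the multiplicative Chernoff bound
\[
\Pr[X \geq (1+\delta)\mu] \;\leq\; \exp\!\Bigl(-\tfrac{\delta^2 \mu}{2+\delta}\Bigr),
\]
a short calculation yields $\Pr[X \geq N/2] \leq \exp\!\bigl(- \tfrac{(1-2p)^2}{2(1+2p)} N\bigr)$, which depends only on $p$ and $N$. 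Setting $c_p := \bigl\lceil \tfrac{1+2p}{(1-2p)^2} \bigr\rceil$ then makes the exponent at least $t$, giving the claimed bound $e^{-t}$.

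The only genuinely delicate point is the case where the error probability of a specific comparison is strictly less than $p$; here I would note that $X$ is stochastically dominated by a Binomial$(N, p)$ random variable, so the Chernoff estimate applied to the dominating distribution still upper-bounds $\Pr[X \geq N/2]$. The rest is routine algebra, and no obstacle is expected; the argument is self-contained and does not rely on any of the tournament or multi-phase machinery introduced later.
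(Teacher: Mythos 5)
Your proof is correct and follows essentially the same route as the paper: a multiplicative Chernoff bound on the $2c_pt+1$ independent votes, with $c_p$ chosen so that the exponent is at least $t$. The only (cosmetic) differences are that you bound the upper tail of the number of \emph{wrong} answers (plus a stochastic-domination remark for error probability strictly below $p$) rather than the lower tail of the number of correct answers, which yields the equally valid constant $c_p=\left\lceil \frac{1+2p}{(1-2p)^2}\right\rceil$ in place of the paper's $\left\lceil \frac{4(1-p)}{(1-2p)^2}\right\rceil$.
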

\begin{proof}
	Suppose, w.l.o.g., that $x < y$. Let $X_i \in \{0,1\}$	be an indicator random variable that is $1$ iff the $i$-th comparison (resp. query) is correct.
	Since the $X_i$s are independent Bernoulli random variables of parameter at least $1-p$, $\sum_{i=1}^{2 c_p \cdot t+1} X_i$ is stochastically dominated by a binomial random variable $X$ of parameters $2 \eta = 2 c_p t +1$ and $1-p$, and hence $\E[X]=2\eta(1-p)=(1-p)(2c_p\cdot t+1)$.
	Moreover, since $p < 1/2$ we know that $2(1-p) > 1$ and hence we can use 
	the Chernoff bound~\cite[Theorem~4.2~(2)]{MitzenmacherU05}
	$
	\Pr\left( X \le (1-\delta)\E[X] \right) \le \exp\left(-\frac{\delta^2 \E[X]}{2} \right) \; \forall \delta \in (0,1)
	$
	to upper bound to the probability of failure of the majority strategy. Indeed:
	\begin{align*}    
	\Pr(X \le \eta) & = \Pr\left( X \le  \frac{1}{2(1-p)} \E[X] \right)
	\le \exp\left( - \frac{ (2 (1-p) - 1)^2}{8(1-p)^2} 2\eta (1-p) \right) \\
	&= \exp\left( - \frac{(1-2p)^2}{4(1-p)} \eta \right)
	< \exp \left(- c_p t \frac{(1-2p)^2}{4(1-p)} \right),
	\end{align*}
	which satisfies claim once we choose $c_p = \left\lceil \frac{4(1-p)}{(1-2p)^2} \right\rceil$. \qed
\end{proof}

\section{Lower Bound}
\label{sec:lower_bounds}

In this section we establish our lower bound of $\Omega(\frac{n}{k} \log \frac{1}{q})$ to the expected running time required by any algorithm that is able to solve $\ftmin(k)$ with probability $1-q$.
For asymptotically small values of $k$ we will show that any faster algorithm for $\ftmin(k)$ implies the existence of an algorithm for $\ftmin(1)$ whose running time contradicts the lower bound of \cite{FeigeRPU94}. 
For larger values of $k$ (recall that $k$ is allowed to be a function of $n$) the above strategy does not work but we can show that any algorithm that does not perform a sufficiently large number of comparisons fails to correctly identify one of the smallest $k$ elements with a sufficiently high probability when its input sequence is a random permutation of the elements.

\begin{theorem}
	The expected number of comparisons of any (possibly randomized) algorithm solving $\ftmin(k)$ with probability $1-q$, for $1 \le k \le c \cdot n$ and any constant $c<1$, is $\Omega(\frac{n}{k} \log n )$.	
\end{theorem}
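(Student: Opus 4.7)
The plan is to apply Yao's minimax principle, reducing the claim to a lower bound on the expected number of queries of any \emph{deterministic} algorithm that succeeds with probability at least $1-1/n$ on every instance in the support of a carefully chosen input distribution $D$. I take $D$ to be uniform over $m = \lfloor n/k \rfloor$ instances $I_1,\dots,I_m$ obtained by partitioning the $n$ elements into pairwise disjoint size-$k$ blocks $G_1,\dots,G_m$ and declaring $G_j$ to be the set of relevant elements of $I_j$. This family realizes the construction sketched in the overview: every element is non-relevant in $m-1$ of the $m$ instances, the relevant elements span the whole ground set, and any two distinct instances $I_j, I_{j'}$ differ only in the answer distributions of queries to elements of $G_j \cup G_{j'}$ (Bernoulli-$(1-p)$ vs.\ Bernoulli-$(p)$).

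Fix a deterministic algorithm $A$ succeeding with probability at least $1-1/n$ on every $I_j$, and let $P_j$ denote the distribution of its adaptive transcript on $I_j$. Since the events $\{\text{output}\in G_j\}$ are pairwise disjoint across $j$ while the ``correct'' one has probability at least $1-1/n$ under $P_j$, the binary-KL inequality gives $\mathrm{KL}(P_j \,\|\, P_{j'}) = \Omega(\log n)$ for every $j \neq j'$. The adaptivity of $A$ is handled by the standard chain rule for KL divergence along the transcript: $P_j$ and $P_{j'}$ agree on the answer distribution of every query to elements outside $G_j \cup G_{j'}$, while the per-answer KL inside those two blocks equals a positive constant $C_p$ depending only on $p$. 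This yields, for every pair $j \neq j'$,
\[
\mathbb{E}_{P_j}[N_{G_j}] + \mathbb{E}_{P_j}[N_{G_{j'}}] \;=\; \Omega(\log n),
\]
where $N_S$ denotes the number of queries performed by $A$ to elements of the set $S$.

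The main obstacle is aggregating these $\Theta(m^{2})$ pairwise inequalities into the global $\Omega(m\log n)$ bound, since a naive sum over $j' \neq j$ only yields $(m-2)\mu_j + \mathbb{E}_{P_j}[N_{\mathrm{total}}] = \Omega(m\log n)$ with $\mu_j := \mathbb{E}_{P_j}[N_{G_j}]$, letting queries to $G_j$ itself ``pay'' simultaneously for distinguishing $I_j$ from every other $I_{j'}$. I would close this gap with a dichotomy on the quantities $(\mu_j)_j$: if $\mu_j = o(\log n)$ for some $j$, the inequality immediately forces $\mathbb{E}_{P_j}[N_{\mathrm{total}}] = \Omega(m\log n)$ for that instance. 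Otherwise $\mu_j = \Omega(\log n)$ holds for every $j$, and I would exploit the determinism of $A$ --- whose first query is a fixed element independent of the true instance, and whose subsequent queries form a branching process driven by the observed noise --- together with an averaging argument over the uniform $D$, to conclude that on a constant fraction of the instances $I_j$ the algorithm must ``scan'' $\Omega(m)$ blocks and spend $\Omega(\log n)$ confirmatory queries on each of them before it can direct $\Omega(\log n)$ queries into $G_j$ and commit to an output with the required confidence. In either case the expected number of queries under $D$ is $\Omega(m\log n) = \Omega(\tfrac{n}{k}\log n)$, and Yao's principle lifts the bound to arbitrary randomized algorithms on the worst-case input, yielding the theorem.
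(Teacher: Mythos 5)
There is a genuine gap, and it sits exactly where you flag the ``main obstacle'': the aggregation of the pairwise inequalities $\E_{P_j}[N_{G_j}] + \E_{P_j}[N_{G_{j'}}] = \Omega(\log n)$ into a global $\Omega(m\log n)$ bound is the entire content of the theorem, and your dichotomy does not supply it. In the second branch ($\mu_j = \Omega(\log n)$ for all $j$) the sentence ``the algorithm must scan $\Omega(m)$ blocks and spend $\Omega(\log n)$ confirmatory queries on each of them'' is a restatement of the desired conclusion, not an argument; nothing you have derived (pairwise KL bounds plus determinism of $A$) forces this structural behaviour, and the per-block accounting is in fact subtler than stated, since an adaptive algorithm may distribute its queries very unevenly across blocks and across noise realizations. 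The first branch is also logically insufficient for Yao's principle: if a single $j$ has $\mu_j = o(\log n)$, you get $\E_{P_j}[N_{\mathrm{total}}] = \Omega(m\log n)$ for that one instance, which contributes only $\Omega(\log n)$ to the expectation under the uniform distribution $D$; after Yao you need a \emph{distributional} lower bound, so you would at least need a constant fraction of indices in that branch, and then the remaining (typical) case is exactly the unproved one. Note also that the natural repair of comparing every $P_j$ to a single ``all non-relevant'' reference measure only yields $\Omega(m\log m)$, not $\Omega(m\log n)$, because under the reference measure disjointness of the blocks only bounds the output probability of a typical block by $O(1/m)$ rather than $1/n$; so the $\log n$ factor genuinely has to come out of the aggregation, and your disjoint-block family of only $m = n/k$ instances makes this harder, not easier.

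The paper resolves precisely this difficulty by a different choice of family and a global convexity argument rather than pairwise comparisons: it uses $n$ cyclically shifted instances whose relevant sets overlap, so that each leaf of the decision tree fails on exactly $n-k$ instances while each query position is ``bad'' for at most $k$ instances. Summing the likelihood-ratio inequalities $\Pr(A(I_i)=P_x) \ge (\tfrac{p}{1-p})^{\gamma(i,x)}\Pr(A(I_1)=P_x)$ over all failing (instance, leaf) pairs and applying Jensen's inequality twice (over $i \in H_x$ and then over leaves) yields $1 \ge (n-k)\,(\tfrac{p}{1-p})^{k\overline{h}/(n-k)}$ directly, i.e.\ $\overline{h} = \Omega(\tfrac{n}{k}\log n)$ on every instance of the family, with the aggregation performed automatically by convexity and the combinatorial counts $|H_x| = n-k$ and $\sum_i \gamma(i,x) \le k\,h(x)$. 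To salvage your route you would need either to switch to such an overlapping family with this global argument, or to carry out a genuine change-of-measure aggregation in the style of best-arm-identification lower bounds; as written, the proposal identifies the right obstacle but does not overcome it.
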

\begin{proof}
	Suppose towards a contradiction that there is an algorithm $A$ that is able to solve $\ftmin(k)$ with probability $1-q$ in time $O(\frac{n}{k} \log \frac{1}{q})$, where $k = \kappa(n)$ is allowed to be a function of $n$. 

	We distinguish two cases depending on whether  $\kappa(n) = o(n)$  or $\kappa(n)=\Omega(n)$.
	If $\kappa(n) = o(n)$ then pick a value of $\lambda$ that satisfies $\lambda \ge \kappa(n \lambda)$ (this value always exists since $\kappa(n \lambda) = o(\lambda)$) and consider the \hbox{(multi-)set} $S'$ consisting of $\lambda$ copies of each element in $S$. If we run $A$ on $S'$ with $k = \lambda$ we obtain an algorithm with a running time of $o(\frac{n \lambda}{\lambda} \log \frac{1}{q}) = o(n \log \frac{1}{q})$ that returns one of the smallest $\lambda$ elements of $S'$ with probability $1-q$. Since the smallest $\lambda$ elements of $S'$ are all copies of the smallest element in $S$, the above algorithm solves $\ftmin(1)$ with probability $1-q$, violating the lower bound of $\Omega(n \log \frac{1}{q})$ on the running time of any such algorithm \cite{FeigeRPU94}.\footnote{A closer inspection of the proof of \cite{FeigeRPU94} shows that such a lower bound also applies to expected running times.}

	If $\kappa(n) = \Omega(n)$ we can assume that $q = o(1)$, as otherwise $\Omega(\frac{n}{k} \log \frac{1}{q}) = \Omega(1)$, which is a trivial lower bound.
	We consider the execution of $A$ on a random permutation $S$ of the elements in $\{1, \dots, n\}$.
	Let $\mu$ be the expected running time of $A$ on $S$ and suppose that $2\mu < \frac{\log\frac{1}{2q}}{\log \frac{1}{1-c}} = \frac{\log 2q}{\log (1-c)}$. We will say that an element $x \in S$ is \emph{accessed} by $A$ if $x$ is the returned element or if $x$ is compared with another element at some point during the execution of $A$. The number of accessed elements is a lower bound on the running time of $A$ which, by Markov inequality, is at least $2\mu$ with probability at most $\frac{1}{2}$. In other word, in at least half the executions of $A$, the number of distinct accessed elements is less than $2\mu$.
	Consider any such execution and notice that, since $S$ is a random permutation, the probability that none of the accessed elements is within the $k$ smallest elements of $S$ must be larger than $\left( \frac{(1-c)n}{n} \right)^{2\mu} \le (1-c)^{\frac{\log 2q}{\log (1-c)}} = 2q$.
	Therefore the overall failure probability of $A$ must exceed $\frac{1}{2} \cdot 2q = q$. Since this contradicts our hypothesis on the success probability of $A$, we must conclude that it is false that $2\mu < \frac{\log 2q}{\log (1-c)}$, i.e., we must have $\mu = \Omega(\log \frac{1}{q})$. \qed
	
\end{proof}

\section{Reduction}
\label{sec:reduction}

In this section we reduce the problem of solving $\ftmin(k)$ %
to the problem of solving $\ftmin(\frac{3}{4}n)$.\footnote{In order to ease the notation, here and throughout the rest of the paper we will use $\ftmin(\frac{3}{4}n)$ as a shorthand for $\ftmin(\lceil \frac{3}{4}n \rceil)$.}
Throughout the rest of the paper we will say that an element $x$ is \emph{small} if it is one of the smallest $k$ elements of $S$, otherwise we say that $x$ is \emph{large}.
The reduction selects a set $S^*$ of size $m$ containing at least $\frac{3}{4}m$ small elements, where the value of $m$ will be determined later.

\begin{itemize}[leftmargin=*]
	\item Create $m$ sets by independently sampling, with replacement, $\lceil 3 \frac{n}{k} \rceil$ elements per set from $S$.	
	\item Run \findmin with error probability $q = \frac{1}{10}$ on each of the sets. Let $S^* = \{x_1, \dots, x_m\}$ be the collection of the returned elements, where $x_i$ is the element selected from the $i$-th set.
\end{itemize}

\noindent Using Theorem~\ref{thm-ftmin}, Lemma~\ref{lemma:prob_boosting} and the Chernoff bound, we are able to prove the following:

\begin{lemma}
	\label{lemma:prob_three_fourth_m}
	The probability that less than $\frac{3}{4}m$ elements in $S^*$ are small is at most $e^{- \frac{m}{240}}$.
\end{lemma}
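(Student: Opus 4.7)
The plan is to lower-bound the probability that a single element $x_i \in S^*$ is small, and then combine independence across $i$ with a multiplicative Chernoff bound so as to concentrate the total count $|\{i : x_i \text{ is small}\}|$ around its mean. For a fixed index $i$, let $T_i$ denote the $i$-th sample set and let $A_i$ be the event that $T_i$ contains at least one of the $k$ smallest elements of $S$. Since the $3n/k$ elements of $T_i$ are drawn independently with replacement, I would bound
\[
\Pr(\overline{A_i}) \;\le\; \left(1-\tfrac{k}{n}\right)^{3n/k} \;\le\; e^{-3}.
\]
Moreover, by Theorem~\ref{thm-ftmin}, the invocation of \findmin on $T_i$ with $q=1/10$ returns the true minimum of $T_i$ with probability at least $9/10$, and whenever $A_i$ holds that minimum is itself a small element. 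A union bound over the two failure modes therefore gives $\Pr(x_i \text{ is not small}) \le e^{-3} + 1/10 \le 3/20$.

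Next I would observe that the $m$ sample sets are drawn independently, and that the comparison faults arising in the $m$ executions of \findmin are also mutually independent by the error model. Consequently, the indicators $X_i$, equal to $1$ iff $x_i$ is small, form a collection of independent Bernoullis with $\E[X_i] \ge 17/20$, so $X := \sum_{i=1}^m X_i$ stochastically dominates a $\mathrm{Binomial}(m, 17/20)$ variable whose mean is $17m/20$. Writing $3m/4 = (1-\delta)\cdot\frac{17m}{20}$ with $\delta = 2/17$, a standard multiplicative Chernoff bound then yields
\[
\Pr\!\left(X < \tfrac{3m}{4}\right) \;\le\; \exp\!\left(-\tfrac{\delta^2}{2}\cdot\tfrac{17m}{20}\right) \;\le\; \exp\!\left(-\tfrac{m}{170}\right) \;\le\; \exp\!\left(-\tfrac{\gamma \log n}{240}\right),
\]
where the last two steps use $m \ge \gamma\log n$ and the numerical inequality $\tfrac{1}{170} \ge \tfrac{1}{240}$.

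The argument is essentially routine: once the per-index success probability is bounded below by the constant $17/20 > 3/4$, everything else is a standard concentration calculation and the slack between $17/20$ and $3/4$ absorbs the Chernoff loss. The only point I would be careful about is isolating the two randomness sources (the composition of the sample sets versus the comparison errors inside \findmin) well enough that the indicators $X_i$ are \emph{mutually} independent across $i$ and a single-sum Chernoff bound applies; I expect this bookkeeping, rather than any numerical estimate, to be the main place where a clean write-up has to be careful.
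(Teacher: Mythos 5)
Your proposal is correct and follows essentially the same route as the paper's proof: bound the probability that a sample set misses all small elements by $e^{-3}$, add the \findmin failure probability $q=\frac{1}{10}$ via a union bound, use independence to dominate the count of small $x_i$'s stochastically by a binomial, and finish with a lower-tail multiplicative Chernoff bound together with $m \ge \gamma \log n$. The only differences are harmless numerical choices (per-element success probability $\frac{17}{20}$ versus the paper's $\frac{5}{6}$, giving exponent $\frac{m}{170}$ instead of $\frac{1}{240}\gamma\log n$ directly), which still yield the claimed bound.
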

\begin{proof}
	Since the $i$-th set contains at least $3 \frac{n}{k}$ elements and each of them is small independently with a probability of $\frac{k}{n}$, the probability that no element in the $i$-th set is small is upper bounded by
	\[
		\left(1- \frac{k}{n}\right)^{3 \frac{n}{k}} \le \left( e^{-\frac{k}{n}} \right)^{3 \frac{n}{k}} = e^{-3} < \frac{1}{20},
	\]
	\noindent where we used the inequality $1+x \le e^x$.
	In other words, for every $i$, the event ``the $i$-th set contains a small element'' has probability at least $1-\frac{1}{20}$.
	Moreover, by our choice of $q$, the probability that \findmin returns the correct minimum of the $i$-th set is at least $1-\frac{1}{10}$.
	Clearly, if both the previous events happen, $x_i$ must be a small element and, by the union bound, the complementary probability can be at most $\frac{1}{20} + \frac{1}{10} < \frac{1}{6}$.

	Let $X_i$ be an indicator random variable that is $1$ iff $x_i$ is a small element so that 
	$X = \sum_{i=1}^m X_i$ is the number of small elements in $S^*$.
	Since the $x_i$s are independently small with a probability of at least $\frac{5}{6}$, the variable $X$ is stochastically larger than a Binomial random variable of parameters $m$ and $\frac{5}{6}$.
	As a consequence $\E[X] \ge \frac{5}{6} m$ and, by using Chernoff bound~\cite[Theorem~4.2~(2)]{MitzenmacherU05}, we obtain:	
	\begin{equation*}
	\Pr\left( X \le \frac{3}{4} m \right) \le \Pr\left(X \le \frac{9}{10} \E[X]\right) \le e^{- \frac{1}{2}(\frac{1}{10})^2 \cdot \frac{5}{6} m} =e^{- \frac{m}{240}}. \tag*{\qed}
	\end{equation*} 
\end{proof}

\noindent We are now ready to show the consequence of the above reduction:

\begin{lemma}
	\label{lemma:reduction_findsmall}
	Let $A$ be an algorithm that solves $\ftmin( \frac{3}{4}n )$ with a probability of success of at least $1-q_A$, and let $T(n,q_A)$ be its running time.
	For any $k$ and any $q \in (0, \frac{1}{2}]$, there exists an algorithm that solves $\ftmin(k)$ with probability $1-q$ in $O\left(\frac{n}{k} \log \frac{1}{q} \right) + T\left(\Theta(\log \frac{1}{q}), q/2 \right)$ time.
\end{lemma}
\begin{proof}
	We first choose $m = \lceil 240 \ln \frac{2}{q} \rceil$ and we compute the set $S^*$ according to our reduction.\footnote{Throughout the rest of the paper we will use $\log$ for binary logarithms and $\ln$ for natural logarithms.}  Then we run $A$ on $S^*$ with success probability $q_A = q/2$ and answer with the element it returns.
	The first step of the reduction can be easily implemented in $O(\frac{mn}{k}) = O(\frac{n}{k} \log \frac{1}{q})$ time, and since
	each of the $m = O(\log \frac{1}{q})$ executions of \findmin requires time $O( \frac{n}{k} \log \frac{1}{q}) = O(\frac{n}{k})$ (see Theorem~\ref{thm-ftmin} and recall that $q=1/10$), the total time spent for the second step is also $O(\frac{n}{k} \log \frac{1}{q})$. %
	By Lemma~\ref{lemma:prob_three_fourth_m}, the probability that less than $\frac{3}{4}m$ elements in $S^*$ are small is at most 
	$e^{-\frac{m}{240}} \ge
	e^{- \ln \frac{2}{q}} = \frac{q}{2}$.
	Since
	the probability that $A$ returns one of the smallest $\frac{3}{4}m$ elements in $S^*$ is at least $q_a = \frac{q}{2}$, the claim follows by using the union bound. \qed
\end{proof}

It is not hard to see that, if we choose algorithm $A$ in Lemma~\ref{lemma:reduction_findsmall} to be \findmin we have $T(n, q_A)= O(n\log \frac{1}{q_A})$ which, thanks to our reduction, allows to solve $\ftmin(k)$ in time $O(\frac{n}{k}\log \frac{1}{q} +\log^2 \frac{1}{q})$ with probability $1 - q$. This matches our lower bound of $\Omega(\frac{n}{k}\log \frac{1}{q})$ for $k = O(\frac{n}{\log 1/q})$.
Nevertheless, the major difficulty in solving $\ftmin(k)$ lies in the case $k=\omega(\frac{n}{\log 1/q})$.

\section{Solving \texorpdfstring{$\ftmin(k)$}{FT-Min(k)} in Optimal Expected Time}
\label{sec:optimal_expected}

In this section we solve $\ftmin(k)$ with probability $1-q$ in the optimal $O(\frac{n}{k} \log \frac{1}{q})$ expected time.
We achieve this by designing a simple algorithm that requires $O(\log\frac{1}{q})$ expected time to solve $\ftmin(\frac{3}{4}n)$ with probability $1-q$. We will assume that $n \ge 4$ since otherwise it suffices to return any element in $S$.

Given $\delta \in [0,1]$, we call $S^-_\delta$ the set containing the smallest $\lceil \delta n \rceil$ elements of $S'$, and we let $S^+_\delta = S \setminus S^-_\delta$. 
We will make use of an oracle $\mathcal{O}$ that can be queried with an element $x \in S$, answers in constant time, and satisfies the following conditions:
$\mathcal{O}$ reports an element $x$ to be small with probability at least $1 - \frac{2}{5}$ if $x \in S^-_{1/3}$ and with probability at most $\frac{2}{5}$ if $x \in S^+_{3/4}$. In other words, $\mathcal{O}$ identifies whether an element in $S^-_{1/3} \cup S^+_{3/4}$ is small with a failure probability of at most $\frac{2}{5}$.
Queries to $\mathcal{O}$ can be repeated and errors in the answers are independent.
Notice that $\mathcal{O}$ provides no guarantees on the success probability if $x \in S^-_{3/4} \setminus S^-_{1/3}$.

To design $\mathcal{O}$ we find convenient to assume that $p \le \frac{1}{16}$. We can do so without loss of generality thanks to Lemma~\ref{lemma:prob_boosting} (if $p > \frac{1}{16}$ then we can simulate each comparison with the majority result of $6c_p+1$ comparison).
The oracle $\mathcal{O}$ answers a query for an element $x \in S'$ by comparing $x$ with a randomly sampled element $y$ from $S \setminus \{x\}$. If $x$ compares smaller than $y$, then $x$ is reported to be small, otherwise it is reported to be large.
Suppose that $x \in S^-_{1/3}$, if $x$ is (incorrectly) reported as large at least one of the following two conditions must be true i) $y \in S^-_{1/3}$ or ii) the comparison between $x$ and $y$ returned the wrong result.
The first condition is true with probability at most $\frac{\lceil n/3 \rceil - 1}{n-1} \le \frac{1}{3}$ while the second condition is true with probability at most $p \le \frac{1}{15}$. 
Therefore the probability that $x$ is reported as large is at most $\frac{1}{3} + \frac{1}{16} < \frac{2}{5}$.
If $x \in S^+_{3/4}$ then, in order for $x$ to be (incorrectly) reported as small,
we must have that i) $y \in S^+_{3/4}$ or ii) the comparison between $x$ and $y$ returned the wrong result.
The first condition is true with probability at most $\frac{n - \lceil 3n/4 \rceil - 1}{n-1} \le \frac{1}{4}$ while the second condition is true with probability at most $p \le \frac{1}{16}$.
Overall, $x$ is reported as small with probability at most $\frac{1}{4} + \frac{1}{16} < \frac{2}{5}$.

We are now ready to describe our algorithm. Let $c_\mathcal{O}$ be the constant of Lemma~\ref{lemma:prob_boosting} for $p=\frac{2}{5}$. The algorithm works in phases: In the $i$-th phase we select one element $x_i$ uniformly at random from $S$ and we perform a \emph{test} on $x_i$. This test consists of $2 c_\mathcal{O} \lceil \ln \frac{2^i}{q} \rceil + 1$ queries to $\mathcal{O}$ and \emph{succeeds} if $x$ is reported as small at by the majority of the queries (otherwise it \emph{fails}).
If the test on $x_i$ succeeds we return $x_i$. Otherwise we move to the next phase.

We start by proving a lower bound on the success probability of our algorithm.

\begin{lemma}
	The above algorithm solves $\ftmin(\frac{3}{4}n)$ with probability at least $1-\frac{q}{4}$.
	\label{lemma:exp_optimal_success}
\end{lemma}
\begin{proof}
	If the algorithm fails then either it does not terminate or it returns an element in $S^+_{3/4}$. It is easy to see that the algorithm terminates almost surely. We now upper bound the probability that the algorithm terminates during phase $i$ by returning a large element (recall that an element if large if it is not among the $k = \lceil \frac{3}{4} n\rceil$ smallest elements of $S$).
	In order for this to happen we must have that i) $x_i$ is large and ii) was reported as small by at least $c_\mathcal{O} \lceil \ln \frac{2^i}{q} \rceil + 1$ of the $2 c_\mathcal{O} \lceil \ln \frac{2^i}{q} \rceil + 1$ queries to $\mathcal{Q}$. 
	The probability of i) is at most $\frac{1}{4}$ and,
	by Lemma~\ref{lemma:prob_boosting}, the probability
	of ii) is at most $e^{-\lceil \ln \frac{2^i}{q} \rceil} \le \frac{q}{2^i}$.
	Using the union bound over the different phases of the algorithm, we can upper bound the overall probability of failure as $\frac{1}{4}\sum_{i=1}^\infty \frac{q}{2^i} = \frac{q}{4}$. \qed
\end{proof}

\begin{lemma}
	The above algorithm has an expected running time of $O(\log \frac{1}{q})$.
	\label{lemma:exp_optimal_time}
\end{lemma}
\begin{proof}
	Consider a generic phase $i$ and suppose that the algorithm did not stop during phases $1, 2, \dots, i-1$. The probability that the algorithm stops during phase $i$ is at least:
	\[
		\Pr\left(x_i \in S^-_{\frac{1}{3}}\right) \cdot \Pr\left(\text{the test on $x_i$ succeeds} \, \Big| \, x_i \in S^-_{\frac{1}{3}}\right)
		\ge \frac{1}{3} \cdot \left(1 - \frac{q}{2^i} \right) \ge \frac{1}{4},
	\]
	where we used the fact that the test on $x_i$ succeeds with probability at least $1 - e^{- \lceil \ln \frac{2^i}{q} \rceil} \ge 1 - \frac{q}{2^i} \ge \frac{1}{4}$ as ensured by Lemma~\ref{lemma:prob_boosting}.
	
	Since phase $i$ requires time at most $\kappa \ln \frac{2^i}{q}$ for some constant $\kappa$, the expected running time is at most
	\begin{equation*}
		 \sum_{i=1}^\infty \kappa \ln \frac{2^i}{q} \cdot \left( \frac{3}{4} \right)^{i-1} \cdot \frac{1}{4} \le 
		\kappa \ln \frac{2}{q}  \cdot \sum_{i=1}^\infty \frac{i}{4} \cdot \left( \frac{3}{4} \right)^{i-1} = 4 \kappa \ln \frac{2}{q} = O\left( \log  \frac{1}{q} \right).
	\tag*{\qed}
	\end{equation*}
\end{proof}

Combining Lemma~\ref{lemma:exp_optimal_success} and Lemma~\ref{lemma:exp_optimal_time} we can conclude that our algorithm solves $\ftmin(\frac{3}{4}n)$ with probability at least $1-q$ in expected $O(\log \frac{1}{q})$ time. Lemma~\ref{lemma:reduction_findsmall} immediately implies the following:

\begin{theorem}
	$\ftmin\left( k \right)$ can be be solved with probability $1-q$ in $O( \frac{n}{k} \log \frac{1}{q} )$ expected time.
\end{theorem}

We conclude this section by pointing out that if we run the above algorithm for up to $6 \log \frac{1}{q}$ phases (and return a random element if the number of phases is exceeded) we can also upper bound its worst-case time complexity with $O( \sum_{i=1}^{6\log 1/q} \log \frac{2^i}{q}) = O(\log^2 \frac{1}{q})$ and lower bound its success probability with $1 - \frac{q}{4} - \left( \frac{3}{4} \right)^{6\log \frac{1}{q}} > 1 - \frac{q}{4} - q^2 \ge 1 - q$.
Combining this algorithm with Lemma~\ref{lemma:reduction_findsmall}, we obtain an algorithm that solves $\ftmin\left( k \right)$ with probability $1-q$ in $O( \frac{n}{k} \log \frac{1}{q} )$ expected time and $O(\frac{n}{k} \log \frac{1}{q}  + \log^2 \frac{1}{q})$ worst-case time. In the next section we will improve the worst-case running time.

\section{Solving \texorpdfstring{$\ftmin(k)$}{FT-Min(k)} in Almost-Optimal Worst-Case Time}
\label{sec:worst_case}

In this section we design an algorithm that solves $\ftmin\big(\frac{3}{4}n \big)$ with probability $1-q$ using $O(\log \frac{1}{q})$ comparisons in the worst case. 

For the sake of simplicity we assume that $n$ is a power of $2$,\footnote{If this is not the case it will suffice to pad $S$ with $2^{\lceil \log n \rceil}-n$ randomly selected copies of elements from $S$.} we will make use of a parameter $\rho \in (0, \frac{1}{2}]$, and we start by describing an algorithm that requires $O( n \cdot  \log \frac{1}{\rho}  \cdot (\log n + \log \frac{1}{\rho}))$ time to solve $\ftmin\big(\frac{3}{4}n \big)$ with a success probability of $1 - \rho^n$.

Our algorithm simulates a knockout tournament and works in $\log n$ rounds: in the beginning we construct a set $S_n$ that contains $n$ elements each of which is obtained
by running \findmin with probability of error $\frac{\rho^2}{2}$ on a set of
$\lceil 2 \log \frac{1}{\rho} \rceil$ elements that are  sampled \emph{with replacement}  from the input set $S$. Then, in the generic $i$-th round we match together $\frac{n}{2^{i}}$ pairs of elements selected \emph{without replacement} from the set $S_{\frac{n}{2^{i-1}}}$ and we add the \emph{match winners} to a new set $S_{\frac{n}{2^i}}$. 
After the $(\log n)$-th round we are left with a set $S_1$ containing a single element: this element is \emph{winner of the tournament}, i.e., it is the element returned by our algorithm.

A match between elements $x$ and $y$ in the $i$-th round consists of 
$
2 c_p \left\lceil 2^{i} \ln \frac{1}{\rho} \right\rceil + 3 
$ comparisons using the majority strategy of Lemma~\ref{lemma:prob_boosting}, i.e., the winner of the match is the element that is reported to be smaller by the majority of the comparisons (here $c_p$ is the constant of Lemma~\ref{lemma:prob_boosting}).
The following lemma provides a lower bound on the success probability of our algorithm:
\begin{lemma}
	\label{lemma:tournament_success_prob}
	Consider a tournament among $n$ elements, where $n$ is a power of $2$. The probability that the winner of the tournament is a small element is at least 
	$1 - \rho^{n+1}$.
\end{lemma}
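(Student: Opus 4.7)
The plan is to isolate a recurrence for the probability that the winner of a balanced sub-tournament is \emph{large} (i.e., not among the smallest $\frac{3}{4}n$ elements of $S$), and then show that this probability shrinks doubly exponentially with the depth of the sub-tournament. Since $S_n$ is obtained by sampling with replacement, the $n$ leaves of the tournament tree are mutually independent, and each of them is small with probability $\frac{3}{4}$. Let $f(i)$ denote the probability that the winner of a size-$2^i$ sub-tournament is large; by symmetry $f(i)$ is well defined, $f(0) = \frac{1}{4}$, and the lemma reduces to proving $f(\log n) \le 2^{-(n+1)}$.

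To set up the recurrence, note that the two sub-subtrees feeding the final match of a size-$2^i$ sub-tournament produce independent winners $W_L, W_R$, each large with probability $f(i-1)$. Let $q_i$ denote the error probability of a single round-$i$ match (a majority vote over $2 c_p \lceil 2^i \rceil + 5$ comparisons). Conditioning on how many of $W_L, W_R$ are large yields
\begin{equation*}
f(i) \;\le\; f(i-1)^2 + 2 f(i-1)\bigl(1 - f(i-1)\bigr)\, q_i \;\le\; f(i-1)^2 + 2 f(i-1)\, q_i,
\end{equation*}
because the final match only matters when exactly one of $W_L, W_R$ is large, in which case it incorrectly returns the large element with probability at most $q_i$. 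Applying Lemma~\ref{lemma:prob_boosting} to the comparisons of a round-$i$ match gives $q_i \le e^{-(2^i + 2/c_p)}$; the constant slack ``$+5$'' in the match size is calibrated precisely so that $q_i \le 2^{-(2^{i-1}+2)}$ holds for every $i \ge 1$ (a small direct check handles the tightest case $i=1$ when $c_p$ is large).

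I would then close the argument by induction on $i$, proving that $f(i) \le 2^{-(2^i+1)}$. The base case $f(0) = 1/4 = 2^{-2}$ is immediate. For the inductive step, setting $\epsilon := 2^{-(2^{i-1}+1)}$, one has $\epsilon^2 = \frac{1}{2} \cdot 2^{-(2^i+1)}$, so the bound $q_i \le \epsilon/2$ established in the previous step gives $f(i) \le \epsilon^2 + 2\epsilon\, q_i \le 2 \epsilon^2 = 2^{-(2^i+1)}$. Evaluating at $i = \log n$ delivers the required failure probability of at most $2^{-(n+1)}$. The main obstacle I anticipate is the bookkeeping of the constants: ensuring that the buffer built into the match size is large enough at \emph{every} level of the tree — and in particular at the first level, where the doubly-exponential gap between the quadratic term $f(i-1)^2$ and the linear term $2 f(i-1) q_i$ is tightest — so that the former always dominates the latter.
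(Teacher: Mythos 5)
Your proof is correct and takes essentially the same route as the paper's: an induction over rounds showing the winner of a size-$2^i$ sub-tournament is large with probability at most $2^{-(2^i+1)}$, splitting the failure event into ``both sub-winners large'' and ``exactly one large and the final match errs'', and using the $2c_p\lceil 2^i\rceil+5$ comparisons per match (via Lemma~\ref{lemma:prob_boosting}) to bound the match error by roughly $2^{-(2^{i-1}+2)}$ so that the quadratic term dominates. The one caveat you flag --- that at the tightest level the lemma alone does not quite give $q_1\le 2^{-3}$ when $c_p$ is large --- is the same constant-bookkeeping slack present (and left implicit) in the paper's own inductive step, so your treatment is, if anything, the more careful of the two.
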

\begin{proof}
	We prove the claim by induction on $n$ by upper bounding the complementary probability.
	If $n=1$, then there exists only one element $x \in S_{n}$,
	which has been obtained by running \findmin on a set $X$ of $\lceil \log \frac{1}{\rho} \rceil$	elements. Since each element of $X$ is large with probability at most $\frac{1}{4}$, the probability that no element in $X$ is small is at most $\frac{1}{4^{2\log \frac{1}{\rho}}} = \rho^4$. Since the invocation of \findmin  fails with probability at most $\frac{\rho^2}{2}$ we have that the $x$ is large with probability at most $\rho^4 + \frac{\rho^2}{2} \le \frac{\rho^2}{4} + \frac{\rho^2}{2} < \rho^2$.
	
	Now, let $n \ge 2$ be a power of $2$ and suppose that the claim holds for tournaments of $n/2$ elements. We prove that it must also hold for tournaments of $n$ elements.
	Let $x$ be the winner of the tournament. Since $n \ge 2$, $x$ must be the winner of the match between the two elements $x_1, x_2 \in S_2$ which, in turn, must be the winners of two (independent) sub-tournaments involving $\frac{n}{2}$ elements each.
	For $x$ to be large either (i) $x_1$ and $x_2$ are both large, which happens with probability at most $\rho^{2 \cdot \frac{n}{2} + 2} = \rho^{n+2}$ by inductive hypothesis, or (ii) exactly one of $x_1$ and $x_2$ is large and it wins the match.
	The probability that exactly one of $x_1$ and $x_2$ is large can be upper-bounded by the probability that at least one of $x_1$ and $x_2$ is large, which is at most $2 \cdot \rho^{\frac{n}{2} + 1}$. We hence focus on the probability that, in a match between a large and a small element, the large element wins.
	Since $x_1$ and $x_2$ are compared at least $2 c_p \left( \frac{n}{2}  \ln \rho \right) + 3$ times we know, by Lemma~\ref{lemma:prob_boosting},
	that this probability must be smaller than $e^{ - \frac{n}{2} \ln \frac{1}{\rho} - 1 } = \rho^{\frac{n}{2}+1}$. Putting it all together, we have:
	\begin{equation*}	
	\Pr(x \mbox{ is large}) \le \rho^{n+2} + 2 \cdot \rho^{\frac{n}{2} + 1} \cdot \rho^{\frac{n}{2}+1} = (2 \rho) \cdot \rho^{n+1} \le \rho^{n+1}. 	
\tag*{\qed}
	\end{equation*} 
\end{proof}

\noindent We now analyse the running time of our algorithm.

\begin{lemma}\label{lem-time-tournament}
	Simulating the tournament requires
	$O(n \cdot \log n  \cdot \log \frac{1}{\rho} + n \cdot \log^2 \frac{1}{\rho})$ time.
\end{lemma}
\begin{proof}
	The initial selection of the elements in $S_n$ requires time $O(n \cdot \log^2 \frac{1}{\rho})$.
	The tournament itself consists of $\log n$ rounds. The number of matches that take place in round $i$ is $\frac{n}{2^i}$ and, for each match, $O( 2^i \log \frac{1}{\rho} )$ comparisons are needed. It follows that the total number of comparisons performed in each round is $O(n \log \frac{1}{\rho})$ and, since there are $O(\log n)$ rounds,  the overall running time is $O(n \cdot \log n \cdot \log \frac{1}{\rho})$. \qed
\end{proof}

If we now select $\rho = \min \{ \frac{1}{2}, q^{\frac{1}{n}} \}$, we obtain an algorithm for $\ftmin\big(\frac{3}{4}n \big)$ with a running time of $O( n \log n + \log n \cdot \log \frac{1}{q} + \frac{\log^2 \frac{1}{q}}{n})$ and with a success probability of at least 
$1- \rho^{n+1} \ge 1- \rho \cdot (q^{\frac{1}{n}})^n = 1- \rho \cdot q \ge 1 - \frac{q}{2}$.
We can now using this algorithm in our reduction of Lemma~\ref{lemma:reduction_findsmall} to  immediately obtain an algorithm for $\ftmin(k)$ which is optimal for 
$k = O(\frac{n}{\log \log \frac{1}{q}})$.

\begin{theorem}
	$\ftmin(k)$ can be solved with probability $1-q$ in a worst-case time of $O(\frac{n}{k} \log \frac{1}{q} + (\log \frac{1}{q}) \log \log \frac{1}{q} )$. 
\end{theorem}

We can actually combine this algorithm with the one from Section~\ref{sec:optimal_expected}
to simultaneously solve $\ftmin(k)$ with probability $1-\frac{1}{q}$ in $O(\frac{n}{k} \log \frac{1}{q})$ expected time and $O(\frac{n}{k} \log \frac{1}{q} + (\log \frac{1}{q}) \log \log \frac{1}{q})$ worst-case time.

In order to do so we simply run the two algorithms for $\ftmin\big(\frac{3}{4}n \big)$ in parallel until one of them returns.
Clearly the expected running time is asymptotically unaffected, while the probability that this combined algorithm fails can be upper bounded by the sum of the respective failure probabilities, i.e., by at most $\frac{q}{4} + \frac{q}{2} < q$  (recall that the algorithm for $\ftmin\big(\frac{3}{4}n \big)$ of section \ref{sec:optimal_expected} fails with probability at most $\frac{q}{4}$, as shown by Lemma~\ref{lemma:exp_optimal_success}).

\begin{acknowledgements}
The authors wish to thank Tomáš~Gavenčiak, Barbara~Geissmann, Paolo~Penna, and \hbox{Peter~Widmayer} for insightful discussions.
We also wish to thank the anonymous reviewers and the coordinating editor for their careful reading of our manuscript and their many comments and suggestions.
\end{acknowledgements}

\bibliographystyle{spmpsci}
\bibliography{bibliography}

\end{document}